\newtheorem{theorem}{Theorem}
\newtheorem{proposition}{Proposition}
\newtheorem{definition}{Definition}
\newtheorem{lemma}{Lemma}
\newtheorem{remark}{Remark}
\newtheorem{corollary}{Corollary}
\providecommand{\keywords}[1]
{
  \small	
  \textbf{\textit{Keywords:}} #1
}
\providecommand{\Classification}[1]
{
  \small	
  \textbf{\textit{2010 Mathematics Subject Classification:}} #1
}
\numberwithin{equation}{section}
\title{American Passport options in an exponential L\'evy model}
\author{
 Marah Zakaria \\
  \texttt{marahzakaria1@gmail.com} \\
}
\begin{document}
\maketitle
\begin{abstract}
In this paper we examine the problem of valuing an exotic
derivative known as the American passport option where the underlying is driven by a L\'evy process. The passport option is a call option on a trading account. We derive the pricing equation, using the dynamic programming principle,
and prove that the option value is a viscosity solution of variational inequality. We also establish the comparison principle, which yields
uniqueness and the convexity of the viscosity solution.
\end{abstract}
\keywords{L\'evy process, American Passport options, Viscosity solution}.\\
\Classification{35D40,91B25}.
\section{Introduction}
Hyer et al. \cite{Hyer} introduced passport option, that allows the holder to adopt a trading strategy involving securities, while being compensated for the losses resulting from the adoption of such a strategy. They construct the Hamilton–
Jacobi–Bellman (HJB, in short) equation and solve the PDE via Green’s functions. They also define a non-symmetric version
of the problem for which the PDE is solved numerically. Andersen et al. \cite{Andersen} use the
change of num\'eraire to give the one-dimension PDE for the European type and extend it to
the American type. Bian et al. \cite{Bian} establish the mathematical
foundation for pricing the European passport option. They prove the comparison principle, uniqueness and convexity preserving of the viscosity solutions of related HJB. Wang et al. \cite{Wang} study the pricing problem for the European passport options
in a jump-diffusion model by using the method of viscosity solutions. In another paper  Wang et al. \cite{Wang2} study the pricing problem for the American passport options. This paper is concerned with the American passport option. In the work Wang et al. \cite{Wang2}, the underlying stock follows geometric Brownian
motion. However, various empirical studies show that such models seem to dissatisfy
the real market behavior. The purpose of this paper is to give a pricing analysis for the American passport option when the underlying follows a L\'evy process which include the case of Jump diffusion model. The purpose of this paper is to give a pricing analysis for
the European passport option in a jump-diffusion model by a viscosity approach. To
our knowledge, the American option with jump-diffusion processes has not appeared
in the literature before.\\
This paper is organized as follows: In Sect. 2, we rigorously establish the mathematical model for the American passport option pricing problem. We show that the price of
the American passport option verifies a variational inequality. In Sect. 3,
we prove the the existence, the uniqueness and the convexity of the viscosity solution.  
\section{Model formulation}
A L\'evy process $X$ is a c\'adl\'ag  stochastic process with values in $\mathbb{R}$ ,
starting from 0, with stationary and independent increments. The random process $X$
can be interpreted as the independent superposition of a Brownian motion with drift
and an infinite superposition of independent (compensated) Poisson processes. More
precisely we give the following L\'evy–Itô decomposition presentation
of X:
\begin{align}\label{eq:X}
dX_{t}&=\big(r-a-\frac{\sigma^2}{2}+\int_{\mathbb{R}} (e^{z}-1)-z1_{z<1}\nu(dz)\big)dt+\sigma dW_{t}\\&+\int_{|z|<1} z (J(dt,dz)-\nu(dz)dt\big)+\int_{|z|>1} z J(dt,dz)
\end{align}
In Equation \eqref{eq:X}, $W_{t}$ is a Brownian motion process which is adapted to the filtration. The component $J(ds,dz)$ is a Poisson random measure with intensity measure  $\nu(dz)$. The measure $\nu$ is a positive Radon measure on $\mathbb{R} \setminus {0}$, called the Lévy measure, and it satisfies
\begin{align*}
\int_{|z|>1} e^{z} \nu (dz)<\infty,
\end{align*}
$J(ds,dz)-\nu (ds,dz)$ is the compensated Poisson random measure that corresponds to $J(ds,dz)$. \\
We will from now on assume $\mathbb{P}$ to be a risk-neutral measure, the interest
rate to be a constant r and a constant
positive dividend yield $a$, per annum.\\
We suppose the risky asset ${(S_t)}_{0\leq t\leq T}$ and defined on a filtered probability space $\{\Omega,\mathcal{F}, \mathcal{F}_t,\mathbb{P}\}_{0\leq t\leq T}$, evolves according to the following expression:
\begin{align}\label{eq:S}
S_{t}=e^{X_t}
\end{align}
Applying Ito lemma:
\begin{align}\label{eq:S}
\frac{dS_{t}}{S_{t}}=(r-a)dt+\sigma dW_{t}+\int_{\mathbb{R}} (e^{z}-1)(J(dt,dz)-\nu(dz)dt)
\end{align}
The passport option, introduced and marketed by Bankers Trust, is a call option
on the balance of a trading account. Denote the option holder’s trading strategy by $q_t$ , the
number of shares held at time $t$. $q_t$ can be positive (if the option holder thinks the
asset price will rise) or negative (if he/she thinks the asset price will fall). There is a
constraint on $q_t$ : $|q_t| \leq C$.  At expiry T , the issuer
has to pay $X^+_T = max(X_T , 0)$ to the option holder, where $X_t$ represents the value of the trading account.
\begin{align*}
dX_{t}&=q_t(dS_t-rS_tdt)+r X_tdt
\end{align*}
 We are interested in American-type of Passport derivative of the form:
\begin{align}\label{problem}
V(t,S_{t}, X_{t})=\sup_{|q|\leq C}\sup_{\tau \in \mathcal{T}_{t,T}} \mathbb{E}[e^{-r(\tau -t )}X_T^+|\mathcal{F}_t]
\end{align}
Throughout this paper, we assume that $C = 1$. We do not lose generality because $V_{C=C^*}(t,S_{t}, X_{t}) = V_{C=1}(t,C^*S_{t}, X_{t})$
\begin{proposition}
Let 
\begin{align*}
V^q(t,S_{t}, X_{t})=\sup_{\tau \in \mathcal{T}_{t,T}} \mathbb{E}[e^{-r(\tau -t )}X_T^+|\mathcal{F}_t]
\end{align*}
We denote $L_t=\frac{X_t}{S_t}$ Then we have the following relationship:
\begin{align*}
V^q(t,S_{t}, X_{t})=S_t\sup_{\tau \in \mathcal{T}_{t,T}} \mathbb{E}^{\mathbb{Q}}[e^{-a(\tau-t )}L_T^+|\mathcal{F}_t]
\end{align*}
where $\mathbb{Q}$ is defined by:
\begin{align*}
\frac{d\mathbb{Q}}{d\mathbb{P}}=\frac{S_{T}}{S_{0} e^{-(r-a)T}}
\end{align*}
\end{proposition}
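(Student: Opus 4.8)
The plan is to recognise this identity as a change-of-num\'eraire formula in which the dividend-discounted stock $e^{-(r-a)t}S_t$ serves as the new num\'eraire. The first step is to check that this process is a genuine $\mathbb{P}$-martingale. From the dynamics \eqref{eq:S} the drift of $S_t$ is exactly $(r-a)S_t$, so $M_t := e^{-(r-a)t}S_t$ has zero drift and is a local martingale; the integrability hypothesis $\int_{|z|>1}e^{z}\nu(dz)<\infty$ guarantees $\mathbb{E}^{\mathbb{P}}[S_t]=S_0 e^{(r-a)t}<\infty$, upgrading $M_t$ to a true martingale. Normalising, I would define the density process
\begin{align*}
Z_t=\frac{d\mathbb{Q}}{d\mathbb{P}}\Big|_{\mathcal{F}_t}=\frac{e^{-(r-a)t}S_t}{S_0},
\end{align*}
so that $Z_t>0$, $\mathbb{E}^{\mathbb{P}}[Z_t]=1$, and $\mathbb{Q}$ is the stated measure. (I note in passing that the exponent in the displayed Radon--Nikodym derivative should read $e^{(r-a)T}$ in the denominator for $Z_T$ to integrate to one; I use this corrected normalisation throughout.)

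Next, fix a stopping time $\tau\in\mathcal{T}_{t,T}$ and apply the abstract Bayes rule for conditional expectations. Since $\tau$ is bounded by $T$ and $Z$ is a true martingale, the optional sampling theorem gives $\mathbb{E}^{\mathbb{P}}[Z_T\mid\mathcal{F}_\tau]=Z_\tau$, so for any $\mathcal{F}_\tau$-measurable payoff $Y\geq 0$,
\begin{align*}
\mathbb{E}^{\mathbb{Q}}[Y\mid\mathcal{F}_t]=\frac{1}{Z_t}\,\mathbb{E}^{\mathbb{P}}[Z_\tau Y\mid\mathcal{F}_t].
\end{align*}
Taking $Y=e^{-a(\tau-t)}L_\tau^{+}$ and using the explicit ratio
\begin{align*}
\frac{Z_\tau}{Z_t}=\frac{S_\tau}{S_t}\,e^{-(r-a)(\tau-t)},
\end{align*}
I would obtain, after multiplying by $S_t$ and collapsing the two discount factors via $e^{-a(\tau-t)}e^{-(r-a)(\tau-t)}=e^{-r(\tau-t)}$,
\begin{align*}
S_t\,\mathbb{E}^{\mathbb{Q}}[e^{-a(\tau-t)}L_\tau^{+}\mid\mathcal{F}_t]
=\mathbb{E}^{\mathbb{P}}\big[e^{-r(\tau-t)}\,S_\tau L_\tau^{+}\mid\mathcal{F}_t\big].
\end{align*}

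The final step is the pathwise identity $S_\tau L_\tau^{+}=X_\tau^{+}$, which holds because $S_\tau>0$ and $L_\tau=X_\tau/S_\tau$, whence $S_\tau(X_\tau/S_\tau)^{+}=X_\tau^{+}$. Substituting this back shows that for every fixed $\tau$,
\begin{align*}
\mathbb{E}^{\mathbb{P}}[e^{-r(\tau-t)}X_\tau^{+}\mid\mathcal{F}_t]=S_t\,\mathbb{E}^{\mathbb{Q}}[e^{-a(\tau-t)}L_\tau^{+}\mid\mathcal{F}_t].
\end{align*}
Since $S_t>0$ is $\mathcal{F}_t$-measurable and therefore constant with respect to the supremum over $\tau\in\mathcal{T}_{t,T}$, taking the supremum of both sides and pulling $S_t$ outside yields the claimed relation for $V^q$.

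The main obstacle I anticipate is not the algebra but the measure-theoretic bookkeeping at the random time $\tau$: one must confirm that $Z$ is a true (not merely local) martingale so that optional sampling applies and the Bayes rule is valid at $\tau$ rather than only at the fixed horizon $T$, and this is precisely where the L\'evy integrability condition is used. A secondary point worth stating carefully is the exercise convention in the definition of $V^q$: the supremum over stopping times is only meaningful if the payoff is evaluated at $\tau$ (i.e.\ $X_\tau^{+}$), and the identity above is established in that reading.
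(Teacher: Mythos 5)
Your proof is correct, and it is worth pointing out that the paper itself states this proposition with no proof at all; the closest it comes is inside the proof of the \emph{next} proposition (the IPDE derivation), where the same measure $\mathbb{Q}$ is introduced by identifying $\frac{d\mathbb{Q}}{d\mathbb{P}}$ with the Dol\'eans-Dade exponential $\mathcal{E}\big(\int_0^T \sigma\, dW_s+\int_0^T\!\int_{\mathbb{R}}(e^z-1)(J(ds,dz)-\nu(dz)ds)\big)$ and invoking Girsanov's theorem for semimartingales (Jacod--Shiryaev) to read off the $\mathbb{Q}$-dynamics of $L_t$, namely $dW^{\mathbb{Q}}_t=dW_t-\sigma\,dt$ and $\tilde\nu=e^z\nu$. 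Your route is genuinely different and more self-contained for this particular identity: instead of computing the $\mathbb{Q}$-dynamics, you verify that $Z_t=e^{-(r-a)t}S_t/S_0$ is a true martingale (a positive local martingale with constant expectation is a martingale, the constant expectation coming from the L\'evy exponential-moment identity, which is exactly where $\int_{|z|>1}e^z\nu(dz)<\infty$ and the drift choice in the L\'evy--It\^o decomposition enter), and then apply the abstract Bayes rule together with optional sampling at the bounded stopping time $\tau$ and the pathwise identity $S_\tau L_\tau^+=X_\tau^+$. This proves the identity for each fixed $\tau$ before taking the supremum, which is precisely the content of the proposition; the paper's Girsanov computation is what is needed later for the IPDE but does not by itself justify the measure change under the conditional expectation at a stopping time. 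Two of your side remarks are moreover genuine corrections to the paper's statement: the Radon--Nikodym derivative must read $S_T/(S_0e^{(r-a)T})$, since the paper's $e^{-(r-a)T}$ in the denominator yields a density with expectation $e^{2(r-a)T}\neq 1$ (and is inconsistent with the stochastic-exponential representation the paper itself writes down); and the payoffs under both suprema must be read as evaluated at $\tau$ (i.e.\ $X_\tau^+$ and $L_\tau^+$ rather than $X_T^+$ and $L_T^+$), since otherwise the optimization over stopping times is vacuous.
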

\noindent As a conclusion we can reduce the problem \eqref{problem} by one dimension:
\begin{align*}
V(t,S_{t}, X_{t})=S_tu(t,L_t)
\end{align*}
where
\begin{align}\label{problem}
u(t,L_t)=\sup_{|q|\leq 1}\sup_{\tau \in \mathcal{T}_{t,T}} \mathbb{E}[e^{-a(\tau -t )}L_T^+|\mathcal{F}_t]
\end{align}
\begin{proposition}
Let $f(t,x)= \sup_{|q|\leq 1}\mathbb{E}[e^{-r(\tau -t )}X_T^+|\mathcal{F}_t]$ then $f$ verifies the following IPDE
\begin{align*}
0=&\partial_t f(t,L_t) dt + \sup_{|q|\leq 1}\Big \{\sigma^2\frac{(q_t-L_t)^2}{2}\partial_{ll} f(t,L_t)-(q_t-L_t)\partial_x f(t,L_t)\\
&+\int_{\mathbb{R}}f(t,L_{t-}+(q_t-L_t)\frac{e^z-1}{e^z},)-f(t,X_{t-})-(q_t-L_t)\frac{e^z-1}{e^z}\partial_L f(t,L_{t-})\tilde{\nu}(dt,dz)\Big \}-af(t,L_t) 
\end{align*} 
\end{proposition}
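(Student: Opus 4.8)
The plan is to identify the stated identity as the Hamilton--Jacobi--Bellman equation governing the controlled reduced process $L_t=X_t/S_t$, obtained by combining an It\^o expansion for jump semimartingales with the dynamic programming principle.

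First I would fix an admissible strategy $q$ and compute the stochastic differential of $L_t=X_t\,S_t^{-1}$. Since both $X$ and $S$ are discontinuous, I would use the jump It\^o product rule
\begin{equation*}
dL_t = X_{t-}\,d(S_t^{-1}) + S_{t-}^{-1}\,dX_t + d[X,S^{-1}]_t,
\end{equation*}
after first deriving $d(S_t^{-1})$ by applying the jump It\^o formula to $\phi(s)=1/s$ together with the asset dynamics coming from \eqref{eq:X}. Collecting the continuous and discontinuous contributions, I expect the Brownian coefficient of $L$ to collapse to $\sigma(q_t-L_t)$, the continuous quadratic variation to $\sigma^2(q_t-L_t)^2\,dt$, and, once the jump pieces are combined, a jump of size $z$ in the driving L\'evy process to induce the increment $\Delta L_t=(q_t-L_{t-})\frac{e^z-1}{e^z}$. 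This reads off precisely the diffusion coefficient, the second-order coefficient $\tfrac12\sigma^2(q_t-L_t)^2$, and the nonlocal jump amplitude $(q_t-L_t)\frac{e^z-1}{e^z}$ appearing in the operator.

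Next I would pass to the measure $\mathbb{Q}$ supplied by Proposition 1, under which the reduction $V=S_t\,u(t,L_t)$ holds and the effective discount becomes $a$. By Girsanov for jump processes the Brownian motion acquires a drift $\sigma$ and the jump compensator is tilted from $\nu(dz)$ to $e^z\nu(dz)$; rearranging the finite-variation part then produces the first-order drift term and the zeroth-order contribution $-a f$. With the coefficients and the generator $\mathcal{A}^q$ of $L$ thus identified, I would invoke the dynamic programming principle: for a frozen control $q$ the process $s\mapsto e^{-a(s-t)}f(s,L_s)$ is a supermartingale, and a martingale along the optimizing control. Applying Dynkin's formula (It\^o followed by expectation) on a short interval $[t,t+h]$, dividing by $h$ and letting $h\to 0$ forces the drift of $e^{-a(s-t)}f(s,L_s)$ to be nonpositive, with equality at the optimal $q$; taking the supremum over $|q|\le 1$ then yields the stated IPDE with left-hand side equal to zero in the continuation region.

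The main obstacle is the jump It\^o computation for the ratio $L=X/S$: the three jump contributions (from $d(S^{-1})$, from $dX$, and from the covariation $d[X,S^{-1}]$) must be combined carefully so that the net jump amplitude collapses to $(q_t-L_t)\frac{e^z-1}{e^z}$, and the associated compensators have to be tracked together with the Girsanov tilt $\nu\mapsto e^z\nu$ so that the surviving drift and discount terms are exactly those displayed. A secondary technical point is upgrading the local martingale produced by It\^o's formula to a genuine martingale before taking expectations; this I would justify by a localization argument using the integrability assumption $\int_{|z|>1}e^z\,\nu(dz)<\infty$ together with the growth and convexity of $f$.
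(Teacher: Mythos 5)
Your proposal follows essentially the same route as the paper: It\^o's formula for the jump process $L_t = X_t/S_t$, the Girsanov-type change of measure to $\mathbb{Q}$ (with $dW^{\mathbb{Q}}_t = dW_t - \sigma\,dt$ and tilted compensator $\tilde{\nu} = e^z\nu$), and identification of the drift of $e^{-at}f(t,L_t)$ as zero. If anything, your treatment of the last step is more careful than the paper's: where the paper simply declares $e^{-at}f(t,L_t)=\mathbb{E}^{\mathbb{Q}}[e^{-aT}L_T^+|\mathcal{F}_t]$ to be ``a martingale by definition'' (which quietly ignores the supremum over controls), you correctly invoke the dynamic programming supermartingale/martingale characterization together with a Dynkin-formula limit on $[t,t+h]$ before taking the supremum over $|q|\le 1$.
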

\begin{proof}
\noindent Apply Ito’s formula to get
\begin{align*}
d\frac{X_t}{S_t}=\big(q_t-\frac{X_t}{S_t}\big)\big(\sigma dW_t-(\sigma+a) dt+\int_{\mathbb{R}}\frac{e^z-1}{e^z}J(dt,dz)-e^z\nu(dz)dt\big)
\end{align*}
We define the following change (cf. Jacod and Shiryaev \cite{Jacod})
\begin{align*}
\frac{d\mathbb{Q}}{d\mathbb{P}}=\frac{S_{T}}{S_{0} e^{-(r-a)T}}
&=\mathcal{E}\Big(\int_0^T \sigma dW_{s}+\int_0^T \int_{\mathbb{R}} (e^{z}-1)(J(ds,dz)-\nu (dz)ds)\Big)
\end{align*}
\begin{align*}
dW^{\mathbb{Q}}_{t}&=- \sigma dt+ dW_{t},\\
\tilde{\nu}&=e^{z} \nu,
\end{align*}
then
\begin{align*}
dL_t=\big(q_t-L_t\big)\big(-adt+\sigma dW_t^{\mathbb{Q}}+\int_{\mathbb{R}}\frac{e^z-1}{e^z}\big(J(dt,dz)-\tilde{\nu}(dz)dt\big).
\end{align*}
Let $e^{-at}f(t,L_t)=\mathbb{E}^{\mathbb{Q}}[e^{-aT}L_T^+|\mathcal{F}_t]$, which is a martingale by definition. Using Ito lemma,
\begin{align*}
de^{-at}f(t,L_t)&=e^{-at} \Big [\partial_t f(t,L_t) dt+\partial_l f(t,L_t) dL_t 
+ \sigma^2\frac{(q_t-L_t)^2}{2}\partial_{ll} f(t,L_t) -af(t,L_t)dt
\\&+\int_{\mathbb{R}}u(t,L_{t-}+(q_t-L_t)\frac{e^z-1}{e^z})-f(t,L_{t-}) \tilde{J}(dt,dy)\\
&+\int_{\mathbb{R}}f(t,L_{t-}+(q_t-L_t)\frac{e^z-1}{e^z})-u(t,L_{t-})-(q_t-L_t)\frac{e^z-1}{e^z}\partial_L u(t,L_{t-})\tilde{\nu}(dt,dz)\Big].
\end{align*} 
Using the martingal property then we can conclude our result.
\end{proof}
\noindent The Hamilton-Jacobi-Bellman (HJB in short) equation associated
with the problem \eqref{problem} is a variational inequality involving, at least heuristically, a non-linear second order parabolic integro-differential equation (see
Bensoussan, J.L. Lions \cite{Bensoussan}
\begin{align}\label{HJB}
\begin{cases}
min\{-\partial_t u(t,x)-\mathcal{L}u(t,x)+au(t,x),u(t,x)-x\}=0 \text{ }\forall (t,x)\in [0,T)\times \mathbb{R}\\
u(T,x)=x^+
\end{cases}
\end{align}
where
\begin{align*}
\mathcal{L}u(t,x)&=\sup_{|q|\leq 1}\Big \{\sigma^2\frac{(q_t-L_t)^2}{2}\partial_{ll} u(t,L_t)-(q_t-L_t)\partial_x u(t,L_t)\\
&+\int_{\mathbb{R}}u(t,L_{t-}+(q_t-L_t)\frac{e^z-1}{e^z})-u(t,X_{t-})-(q_t-L_t)\frac{e^z-1}{e^z}\partial_L u(t,L_{t-})\tilde{\nu}(dt,dz)\Big \}
\end{align*} 
\begin{remark}
If the risky asset ${(S_t)}_{0\leq t\leq T}$ is under jump-diffusion model:
\begin{align}\label{eq:S}
\frac{dS_{t}}{S_{t}}=(r-a)dt+\sigma dW_{t}+\int_{\mathbb{R}} e^{z}-1 J(dt,dz),
\end{align}
we write the operator  $\mathcal{L}$ as following:
\begin{align*}
\mathcal{L}u(t,x)&=\sup_{|q|\leq 1}\Big \{\sigma^2\frac{(q_t-L_t)^2}{2}\partial_{ll} u(t,L_t)-(q_t-L_t)\partial_x u(t,L_t)\\
&+\int_{\mathbb{R}}u(t,L_{t-}+(q_t-L_t)\frac{e^z-1}{e^z})-u(t,X_{t-}) \tilde{\nu}(dt,dz)\Big \}
\end{align*} 

\end{remark}
\noindent It is well known that the value function $u$ defined in \eqref{HJB} is not always smooth
enough for the expression $u$ to be well defined. Therefore, it is natural to ask for a weak solution such that $u$ is unique even though it is not smooth. One such weak solution, called a viscosity solution, was introduced by Crandall and Lions see
Crandall et al. \cite{Grandall}.\\
We shall need the following spaces of semi-continuous functions on $[0,T]\times \mathbb{R}$ :
\begin{align*}
USC([0,T]\times \mathbb{R}) :=\Bigg \{ v:[0,T]\times \mathbb{R}\rightarrow \mathbb{R}\cup \{ -\infty\}, \text{it is upper semi-continuous} \Bigg \}\\
LSC([0,T]\times \mathbb{R}) :=\Bigg \{ v:[0,T]\times \mathbb{R}\rightarrow \mathbb{R}\cup \{ +\infty\}, \text{it is lower semi-continuous} \Bigg \}
\end{align*} 
\begin{definition}
 Let $u$ be a locally bounded function:
\begin{itemize}
\item A viscosity sub-solution (super-solution) of \eqref{HJB} if  for any $\psi \in \mathcal{C}^{1,2}([0, T] \times \mathbb{R})$, wherever $(t,x) \in [0, T] \times \mathbb{R}$ is a maximum (minimum) of $u-\psi$
\begin{align*}
max \Big(\partial_t \psi(t,x)+\mathcal{L}\psi(t,x) -a\psi(t,x),x -\psi(t,x)\Big)\leq 0(\geq)
\end{align*}
\item  u is a viscosity solution of \eqref{HJB}if it is both super and subsolution.  
\end{itemize}
\end{definition}
\section{Viscosity solutions}
\subsection{Continuity}
This section is devoted to the conitnuity property of the value function $u$. 
\begin{proposition}
u is continuous and
\begin{align}\label{linear}
|u(t,x)-u(s,y)|\leq C(\sqrt{t-s}+|x-y|)
\end{align}
\end{proposition}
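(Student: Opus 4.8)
The plan is to reduce both halves of the estimate \eqref{linear} to moment bounds on the controlled flow $L^{t,x}$, exploiting that its dynamics are affine in the state. Writing
\[
dM_s = -a\,ds + \sigma\,dW_s^{\mathbb{Q}} + \int_{\mathbb{R}}\tfrac{e^z-1}{e^z}\big(J(ds,dz)-\tilde\nu(dz)ds\big),
\]
the SDE of the previous proposition reads $dL_s=(q_s-L_s)\,dM_s$, where the driving semimartingale $M$ depends neither on the control $q$ nor on the starting point. Since the payoff map $\ell\mapsto\ell^+$ is $1$-Lipschitz and the discount satisfies $0<e^{-a(\tau-t)}\le 1$, it suffices to control (i) the sensitivity of $L^{t,x}_s$ in $x$ and (ii) the increment of $L^{t,x}_s$ in $s$. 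I will use the moment estimates $\mathbb{E}^{\mathbb{Q}}[\sup_{t\le s\le T}|L_s^{t,x}|]\le C(1+|x|)$ and $\mathbb{E}^{\mathbb{Q}}[\sup_{t\le r\le s}|L^{t,x}_r-x|]\le C\sqrt{s-t}\,(1+|x|)$, both following from the affine SDE together with the finite second moment of $M$ under $\tilde\nu=e^z\nu$ (guaranteed by $\int_{|z|>1}e^z\nu(dz)<\infty$ and $\int_{|z|<1}z^2\nu(dz)<\infty$). The first estimate also yields the linear growth $0\le u(t,x)\le C(1+|x|)$.

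For the \emph{space} regularity, fix $t$ and run $L^{t,x}$ and $L^{t,y}$ with the \emph{same} control $q$. The difference $D_s:=L^{t,x}_s-L^{t,y}_s$ satisfies $dD_s=-D_{s^-}\,dM_s$, since the $q$-terms cancel, so by the Dol\'eans--Dade formula $D_s=(x-y)\,\mathcal{E}(-M)_s$, where $\mathcal{E}(-M)>0$ (its jumps are $1+\Delta(-M)=e^{-z}>0$) \emph{does not depend on $q$}. Hence, uniformly in $q$ and in $\tau\in\mathcal{T}_{t,T}$,
\[
\mathbb{E}^{\mathbb{Q}}\big[e^{-a(\tau-t)}\,|(L_\tau^{t,x})^+-(L_\tau^{t,y})^+|\big]\le \mathbb{E}^{\mathbb{Q}}\big[\sup_{t\le s\le T}\mathcal{E}(-M)_s\big]\,|x-y|=C\,|x-y|,
\]
and taking suprema together with $|\sup a-\sup b|\le\sup|a-b|$ gives $|u(t,x)-u(t,y)|\le C|x-y|$ with $C$ independent of $x,y,t$.

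For the \emph{time} regularity I invoke the dynamic programming principle, which for $t<s$ reads
\[
u(t,x)=\sup_{q}\sup_{\tau}\mathbb{E}^{\mathbb{Q}}\Big[e^{-a((\tau\wedge s)-t)}\big(\mathbf{1}_{\{\tau\le s\}}(L^{t,x}_\tau)^+ + \mathbf{1}_{\{\tau>s\}}u(s,L^{t,x}_s)\big)\Big].
\]
Restricting to $\tau>s$ gives $u(t,x)\ge \sup_q e^{-a(s-t)}\mathbb{E}^{\mathbb{Q}}[u(s,L^{t,x}_s)]$; combined with the space-Lipschitz bound, the increment estimate, $|e^{-a(s-t)}-1|\le a(s-t)$ and the linear growth of $u$, this controls $u(s,x)-u(t,x)$ above by $C\sqrt{s-t}\,(1+|x|)$. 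For the reverse inequality I bound the DPP integrand on both events by $u(s,x)+C\sup_{t\le r\le s}|L^{t,x}_r-x|$: on $\{\tau>s\}$ directly via space-Lipschitz, and on $\{\tau\le s\}$ via the obstacle inequality $(L^{t,x}_\tau)^+\le u(s,L^{t,x}_\tau)$ followed by space-Lipschitz; taking expectations yields $u(t,x)-u(s,x)\le C\sqrt{s-t}\,(1+|x|)$. Chaining through $(t,x)\to(s,x)\to(s,y)$ then gives \eqref{linear}, and continuity of $u$ follows at once.

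The main obstacle is the time estimate, which hinges on a clean statement and use of the dynamic programming principle; the space estimate is essentially free, because the difference flow is linear and control-independent. Care is also needed in the jump integrability, since the jump size $\tfrac{e^z-1}{e^z}=1-e^{-z}$ is unbounded as $z\to-\infty$, so the moment bounds on $M$ and on $\mathcal{E}(-M)$ must be drawn from the integrability of $\tilde\nu$; and one must track the linear-growth factor $(1+|x|)$ multiplying the time modulus, so that the uniform constant in \eqref{linear} is to be read locally in $x$ (using that $q$ is bounded and $L$ inherits the growth of its initial datum).
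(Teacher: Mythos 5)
Your proposal is correct, but it takes a genuinely different route from the paper in both halves of the estimate. For the space modulus, the paper runs a Gronwall argument on $\mathbb{E}^{\mathbb{Q}}\big[|L^{x}_s-L^{y}_s|^2\big]$, whereas you exploit the affine structure to get the exact representation $L^{t,x}_s-L^{t,y}_s=(x-y)\,\mathcal{E}(-M)_s$ with a control-independent Dol\'eans--Dade exponential; this is sharper and makes uniformity in $(q,\tau)$ transparent, while the Gronwall route is more robust (it would survive general Lipschitz coefficients, not just affine ones). Note that your bound $\mathbb{E}^{\mathbb{Q}}\big[\sup_{s\le T}\mathcal{E}(-M)_s\big]<\infty$ requires (via Doob) the second-moment condition $\int (1-e^{-z})^2\tilde\nu(dz)<\infty$, i.e.\ exactly the integrability the paper silently absorbs into its constant in the Gronwall step; you could even avoid the supremum, since $e^{-a(s-t)}\mathcal{E}(-M)_s=\mathcal{E}(N)_s$ with $N$ the martingale part of $-M$ is a positive local martingale, hence a supermartingale, and optional sampling gives the bound $|x-y|$ directly for every $\tau$. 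For the time modulus, the paper argues directly on stopping times: given $\tau\in\mathcal{T}_{T-t}$ it truncates to $\tau'=\tau\wedge(T-t')\in\mathcal{T}_{T-t'}$ and controls the gap by the flow-increment estimate \eqref{eq:cont}; you instead invoke the dynamic programming principle (which the paper only states afterwards, as Lemma \ref{lemma3}, citing Pham) combined with the obstacle inequality $u\ge x^+$ and the space-Lipschitz bound. Both yield $C(1+|x|)\sqrt{t'-t}$, and your treatment of the obstacle is clean, but the DPP route is non-circular only if one accepts the DPP from Pham's arguments without prior continuity of $u$; the paper's truncation argument is self-contained in this respect. Finally, you rightly flag that the time term carries a factor $(1+|x|)$, so the constant in \eqref{linear} is uniform only locally in $x$ --- the same is true of the paper's own proof, whose displayed statement overstates the uniformity.
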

\begin{proof}
We have
\begin{align*}
\mathbb{E}^\mathbb{Q}\Big[|L^{x}_t-L^{y}_t|^2\Big]&\leq C\Big( |x-y|^2+\mathbb{E}^\mathbb{Q}\Big[\int_0^t q_u|L^{x}_s-L^{y}_s|^2ds\Big]+\mathbb{E}^\mathbb{Q}\Big[\int_0^t\int_{\mathbb{R}} |L^{x}_s-L^{y}_s|^2 |e^z-1|^2\tilde{\nu}(dz)ds\Big]\Big)\\
&\leq C\big( |x-y|^2+\mathbb{E}^\mathbb{Q}\Big[\int_0^t |L^{x}_s-L^{y}_s|^2ds\Big]\big)
\end{align*}
Using Gronwall lemma we conclude that
We have
\begin{align*}
\mathbb{E}^\mathbb{Q}\Big[|L^{x}_t-L^{x}_t|^2\Big]\leq C( |x-y|^2)
\end{align*}
then
\begin{align*}
|u(t,x)-u(t,y)|&= \sup_{|q|\leq 1}\sup_{\tau \in \mathcal{T}_{t,T}} \mathbb{E}[e^{-a(\tau -t )}\big|(L_T^{t,x})^+-(L_T^{t,y})^+\big||\mathcal{F}_t]\\
&\leq C \mathbb{E}^\mathbb{Q}\Big[|L^{x}_t-L^{y}_t|\Big]\\
&\leq C( |x-y|)
\end{align*}
We now show continuity with respect to time for fixed $x$. Let $0 \leq t \leq t' \leq T$. Take $\tau \in \mathcal{T}_{T-t}$ and define $\tau'=\tau \wedge (T-t')$  . We note that $\tau' \in \mathcal{T}_{T-t'}$ and $\tau' \leq \tau\leq \tau' +t'-t$
\begin{align*}
|u(t,x)-u(t',x)|&\leq C  \sup_{\tau' \leq s\leq \tau' +t'-t} \mathbb{E}^\mathbb{Q}[|L^{t,x}_s-L^{t,x}_{\tau'}|]\\
&\leq C\sqrt{(1+x^2)|t'-t|}
\end{align*}
Where we used the fact that
\begin{footnotesize}
\begin{align}\label{eq:cont}
 \sup_{\tau \leq s\leq \tau' }\mathbb{E}^\mathbb{Q}[|L_s^{t,x}-L_{\tau}^{t,x}|^2]\leq C(\tau'-\tau)(1+|x|^2)
\end{align}
\end{footnotesize}
So $u$ is continuous with respect to time.
\end{proof}
The dynamic programming principle for the optimal stopping of a controlled process is the following (Proposition 3.2 in \cite{Pham}):
\begin{lemma}\label{lemma3}
Let $\delta>0$,  we define the stopping time as 
\begin{align*}
\tau^{\delta}_{q,t,x} =\inf \{0\leq s \leq T-t | u(t+s,X_s^{x})\leq X_s^{x}+\delta\}
\end{align*}
For any $\tau \leq \tau^{\delta}_{q,t,x}$
\begin{align*}
u(t,x) =\sup_{|q|\leq 1}\mathbb{E} [e^{-a\tau} u(t+\tau,X_{\tau}^{x})]
\end{align*}
\end{lemma}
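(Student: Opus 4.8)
The plan is to obtain the dynamic programming identity by proving the two inequalities ``$\ge$'' and ``$\le$'' separately. The hypothesis $\tau\le\tau^{\delta}_{q,t,x}$ will serve a single purpose: it forces the controlled state to remain strictly inside the continuation region on $[0,\tau)$, where the exercise payoff is dominated by the continuation value, so that early stopping can be ruled out and only the term $e^{-a\tau}u(t+\tau,X^{x}_{\tau})$ survives. Throughout I would use the Markov (flow) property of the $L$-dynamics derived in Proposition~2, the Lipschitz continuity \eqref{linear}, and the fact that the discount factor associated with $\mathcal{L}$ is $e^{-as}$, so that the natural object is $M_s:=e^{-as}u(t+s,X^{x}_s)$.

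First I would prove the sub-optimality inequality $u(t,x)\ge \sup_{|q|\le 1}\mathbb{E}\big[e^{-a\tau}u(t+\tau,X^{x}_{\tau})\big]$, which amounts to showing that, for every admissible control $q$, the process $M_s=e^{-as}u(t+s,X^{x}_s)$ is a supermartingale. Fixing $q$ and writing $e^{-a\theta}=e^{-a\tau}e^{-a(\theta-\tau)}$ for any continuation stopping time $\theta\ge\tau$, I would condition on $\mathcal{F}_{\tau}$ and use the flow property: conditionally on $\mathcal{F}_{\tau}$ the post-$\tau$ evolution is that of the process restarted from $(t+\tau,X^{x}_{\tau})$, so by the very definition of $u$ as a supremum over admissible controls and stopping times, $\mathbb{E}\big[e^{-a\theta}\,X^{x}_{\theta}\mid\mathcal{F}_{\tau}\big]\le e^{-a\tau}u(t+\tau,X^{x}_{\tau})$. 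Taking expectations, then the supremum over $q$, yields the inequality.

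Next I would prove the reverse inequality $u(t,x)\le \sup_{|q|\le 1}\mathbb{E}\big[e^{-a\tau}u(t+\tau,X^{x}_{\tau})\big]$. Fix $\varepsilon>0$ and choose a control $q$ and a stopping time $\theta$ that are $\varepsilon$-optimal for $u(t,x)$. Here the assumption $\tau\le\tau^{\delta}_{q,t,x}$ is decisive: on the event $\{\theta<\tau\}$ one has $u(t+\theta,X^{x}_{\theta})>X^{x}_{\theta}+\delta$, so the continuation value strictly exceeds the exercise payoff by a fixed gap $\delta$; hence delaying the stop until at least $\tau$ cannot lower the value, and up to a contribution controlled by $\delta$ we may assume $\theta\ge\tau$. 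Conditioning this modified stopping time on $\mathcal{F}_{\tau}$ and invoking the definition of $u$ at $(t+\tau,X^{x}_{\tau})$ gives $\mathbb{E}\big[e^{-a\theta}X^{x}_{\theta}\mid\mathcal{F}_{\tau}\big]\le e^{-a\tau}u(t+\tau,X^{x}_{\tau})$, whence $u(t,x)\le \mathbb{E}\big[e^{-a\tau}u(t+\tau,X^{x}_{\tau})\big]+\varepsilon\le \sup_{|q|\le1}\mathbb{E}\big[e^{-a\tau}u(t+\tau,X^{x}_{\tau})\big]+\varepsilon$, and letting $\varepsilon\to 0$ closes the argument.

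I expect the main obstacle to lie in making the supremum inside the expectation rigorous in the sub-optimality step: one must select, \emph{measurably} in the realized value of $X^{x}_{\tau}$, continuation controls and stopping times that are near-optimal for $u(t+\tau,\cdot)$, and then concatenate them with $q$ restricted to $[0,\tau]$ to produce a genuinely admissible control on $[0,T]$ still satisfying $|q|\le1$. This measurable-selection-and-pasting construction, together with the verification that $M_s=e^{-as}u(t+s,X^{x}_s)$ is well-behaved enough to condition upon, is the technical heart of the proof; the continuity estimate \eqref{linear} guarantees that the near-optimal selections can be chosen depending continuously on the restart point, and the flow property of the L\'evy-driven SDE for $L$ is what legitimizes the conditioning on $\mathcal{F}_{\tau}$.
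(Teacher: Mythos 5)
The paper itself offers no proof of this lemma: it is quoted as the dynamic programming principle from Proposition 3.2 of Pham's paper \cite{Pham}. So your proposal has to be judged against the standard proof of that result, and as written it contains a genuine gap.

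The argument in your second paragraph proves the opposite inequality to the one it announces. Conditioning on $\mathcal{F}_{\tau}$ and invoking ``the definition of $u$ as a supremum'' yields $\mathbb{E}\big[e^{-a\theta}(X^{x}_{\theta})^{+}\mid\mathcal{F}_{\tau}\big]\le e^{-a\tau}u(t+\tau,X^{x}_{\tau})$, i.e.\ an \emph{upper} bound on the reward of any strategy that continues past $\tau$; taking expectations and suprema gives $\sup_{q}\sup_{\theta\ge\tau}\mathbb{E}\big[e^{-a\theta}(X^{x}_{\theta})^{+}\big]\le\sup_{q}\mathbb{E}\big[e^{-a\tau}u(t+\tau,X^{x}_{\tau})\big]$, which is an ingredient of the ``$\le$'' half of the lemma, not of the ``$\ge$'' half you claim to be proving. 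The inequality $u(t,x)\ge\mathbb{E}\big[e^{-a\tau}u(t+\tau,X^{x}_{\tau})\big]$ --- equivalently the supermartingale property of $M_s=e^{-as}u(t+s,X^{x}_s)$ that you assert --- cannot follow from the definition of $u$ as a supremum by any such one-line conditioning: the quantity $\mathbb{E}\big[e^{-a\tau}u(t+\tau,X^{x}_{\tau})\big]$ sits \emph{above} the rewards you can bound this way, and to dominate it by $u(t,x)$ one must exhibit, for each $\varepsilon>0$, an admissible pair (control, stopping time) on $[0,T]$ whose reward is at least $\mathbb{E}\big[e^{-a\tau}u(t+\tau,X^{x}_{\tau})\big]-\varepsilon$. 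That is exactly the measurable-selection-and-pasting construction which you correctly identify in your closing paragraph as ``the technical heart'' but never carry out; so the ``$\ge$'' half is asserted, not proven. A secondary consequence: in your third paragraph, the step ``delaying the stop until at least $\tau$ cannot lower the value'' is itself an application of optional sampling for the supermartingale $M_s$ on the event $\{\theta<\tau\}$ (stopping there forfeits at least $\delta e^{-aT}\,\mathbb{P}(\theta<\tau)$ relative to continuing to $\tau$, once the supermartingale property is available), so the ``$\le$'' half --- which is otherwise structured correctly and is where the hypothesis $\tau\le\tau^{\delta}_{q,t,x}$ genuinely enters --- also rests on the missing pasting step. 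Execute that construction first, then both halves close; without it, neither does.
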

\noindent Now we are ready to give the existence result of a viscosity solution.
\begin{theorem}
u is a viscosity solution of IPDE \eqref{HJB}.
\end{theorem}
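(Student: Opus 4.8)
The plan is to verify separately the subsolution and the supersolution inequalities of the Definition, in each case exploiting the dynamic programming principle of Lemma \ref{lemma3} together with It\^o's formula for the jump process $L$ whose dynamics were derived above. The obstacle component is immediate: since stopping immediately is admissible, $u(t,x)\geq x^{+}\geq x$, so $u-x\geq 0$ everywhere, which supplies the term $x-\psi$ at any contact point where $\psi(t_0,x_0)=u(t_0,x_0)$. The terminal condition $u(T,\cdot)=x^{+}$ holds by the very definition of $u$. It therefore remains to treat the differential part $\partial_t\psi+\mathcal{L}\psi-a\psi$.

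For the supersolution inequality I would fix $\psi\in\mathcal{C}^{1,2}$ such that $u-\psi$ attains a local minimum at $(t_0,x_0)$ with $u(t_0,x_0)=\psi(t_0,x_0)$, so that $\psi\leq u$ near the point. For each constant control $q\in[-1,1]$, Lemma \ref{lemma3} gives $u(t_0,x_0)\geq \mathbb{E}^{\mathbb{Q}}[e^{-a\tau}u(t_0+\tau,L_\tau)]\geq \mathbb{E}^{\mathbb{Q}}[e^{-a\tau}\psi(t_0+\tau,L_\tau)]$ for $\tau\leq\tau^{\delta}$. Applying Dynkin's formula to $e^{-as}\psi(t_0+s,L_s)$ along the dynamics of $L$, dividing by $\tau$ and letting $\tau\downarrow 0$, the compensated (local martingale) part vanishes in expectation and I obtain $\partial_t\psi+\mathcal{A}^{q}\psi-a\psi\leq 0$ at $(t_0,x_0)$, where $\mathcal{A}^{q}$ denotes the generator for the frozen control $q$. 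Taking the supremum over $|q|\leq 1$ yields $\partial_t\psi+\mathcal{L}\psi-a\psi\leq 0$, which combined with the obstacle bound gives the required inequality.

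For the subsolution inequality I would argue by contradiction at a local maximum $(t_0,x_0)$ of $u-\psi$, assuming strict positivity of \emph{both} $\psi-x$ and $-(\partial_t\psi+\mathcal{L}\psi-a\psi)$ there. Strict positivity of $\psi-x$ places $(t_0,x_0)$ strictly inside the continuation region, so by the continuity estimate \eqref{linear} the stopping time $\tau^{\delta}$ is strictly positive for $\delta$ small; strict negativity of $\partial_t\psi+\mathcal{L}\psi-a\psi=\sup_{|q|\leq 1}(\partial_t\psi+\mathcal{A}^{q}\psi-a\psi)$ persists, uniformly in $q$ over the compact set $\{|q|\leq1\}$, on a small space-time neighborhood. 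Inserting these into the Dynkin expansion and using $\psi\geq u$ yields $\mathbb{E}^{\mathbb{Q}}[e^{-a\tau}u(t_0+\tau,L_\tau)]\leq \psi(t_0,x_0)-c\tau$ with $c>0$ independent of $q$; taking the supremum over $q$ then contradicts the equality in Lemma \ref{lemma3}.

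The delicate point throughout is the non-local operator in $\mathcal{L}$: It\^o's formula produces the jump integral $\int_{\mathbb{R}}[\psi(t,x+(q-x)\tfrac{e^z-1}{e^z})-\psi(t,x)-(q-x)\tfrac{e^z-1}{e^z}\partial_x\psi]\,\tilde{\nu}(dz)$, which probes $\psi$ far from the contact point, where the ordering between $\psi$ and $u$ need not hold. I expect this to be the main obstacle. I would resolve it by the standard device of modifying $\psi$ outside a fixed neighborhood of $(t_0,x_0)$ so that the contact ordering with $u$ holds globally without altering $\psi$ and its derivatives at $(t_0,x_0)$, and then splitting the integral into a small-jump part, controlled by the $\mathcal{C}^{2}$ bound together with the moment estimate \eqref{eq:cont}, and a large-jump part, controlled by the integrability hypothesis $\int_{|z|>1}e^{z}\nu(dz)<\infty$ via $\tilde{\nu}=e^{z}\nu$. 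These same bounds justify the dominated passage to the limit $\tau\downarrow 0$ after division by $\tau$.
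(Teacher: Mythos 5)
Your proposal follows essentially the same route as the paper's proof: the supersolution inequality is obtained from the dynamic programming principle of Lemma \ref{lemma3} plus Dynkin/It\^o's formula and a division-by-$\tau$ limit, and the subsolution inequality by contradiction on a small neighborhood inside the continuation region where $-(\partial_t+\mathcal{L}-a)\psi\geq\epsilon$ persists, using the moment estimate \eqref{eq:cont} to make $\mathbb{E}[\tau]$ effectively positive. Your explicit care with the nonlocal term (frozen controls $q$, global modification of the test function, small/large-jump splitting of the integral) is a refinement of what the paper handles implicitly by taking the extremum of $u-\psi$ to be global ``without loss of generality,'' not a genuinely different argument.
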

\begin{proof}
We already know that $u$ is continuous.
Let’s start by showing that $u$ is a super-solution of \eqref{HJB}. Let $(t, x) \in [0, T )\times  \mathbb{R}$ and $\psi \in \mathcal{C}^{1,2}(\mathcal{O})$ such that without loss of generality
\begin{align}\label{eq:super}
0=(u-\psi)(t,x,v)=\min_{[0, T )\times \mathbb{R}} u-\psi.
\end{align}
Applying lemma \eqref{lemma3} , for any stopping time $\epsilon \in [t,T]$
\begin{align*}
u(t,x)\geq \mathbb{E} [e^{-a\epsilon} u(t+\epsilon,X_{\epsilon}^{x})].
\end{align*}
By \eqref{eq:super}
\begin{align*}
0 \geq \mathbb{E} [e^{-a\epsilon} \psi(t+\epsilon,X_{\epsilon}^{x})-\psi(t,x)]
\end{align*}
Using Ito lemma we have
\begin{align*}
0\geq  \mathbb{E} [\int_0^{\epsilon} \Big(\partial_t  +\mathcal{L}-a\Big)\psi(t+s,X_{s}^{x}) ds],
\end{align*}
dividing by  $\epsilon$ and taking the limit as $\epsilon \rightarrow 0$,
\begin{align*}
0\geq  (\partial_t  +\mathcal{L}-a)\psi(t,x)
\end{align*}
We next prove that $u$ is a subsolution of \eqref{HJB}. Let $(t, x) \in [0, T )\times \mathbb{R}$ and $\psi \in \mathcal{C}^{1,2}([0,T[\times \mathbb{R})$ such that without loss of generality
\begin{align}\label{eq:sub1}
0=(u-\psi)(t,x)=\max_{\bar{\theta}\in [0, T )\times \mathbb{R}_+} (u-\psi)(\bar{\theta})
\end{align}
We already know that $u(t,x) \geq x$. If $u(t,x) = x$, the inequality of
subsolution is obviously satisfied. Assume therefore that $u(t,x) > x$.  and define
\begin{align*}
\delta=\frac{u(t,x)-x}{2}>0
\end{align*}
For each admissible control $|q|\leq 1$, we define the stopping time
\begin{align*}
\tau^{\delta_m} =\inf \{0\leq s \leq T-t | u(t+s,X_s^{x} )\leq X_s^{t,x}+\delta_m)\}
\end{align*}
such that $\delta_m \to 0$. To prove the subsolution property, we assume on the contrary that 
\begin{align*}
(-\partial_t-\mathcal{L}+a\big)\psi(t,x)>0 \hspace{0.5cm} \text{ and} \hspace{0.5cm} u(t,x)-x>0 
\end{align*}
As the operator $\partial_t+\mathcal{L}$ is continuous at $(t,x)$ then there exists an $\eta>0$ and an $\epsilon>0$ such that
\begin{align*}
(-\partial_t-\mathcal{L}+a\big)\psi(s,y)\geq \epsilon   
\end{align*}
for all $s,y$ in $B((t,x),\eta)= \{ (s,y)\in  [0, T )\times \mathbb{R}_+ :| y-x|+|s-t| \leq \eta \}$. Let 
\begin{align*}
\tau_0 =\inf \{0\leq s \leq T-t | | X_s^{x} -x|+s>\eta \}
\end{align*}
We define $\tau=\tau_0\wedge \tau^{\delta_m}$
\begin{align}\label{eq:sub2}
u(t,x)= \mathbb{E} [e^{-a\tau} u(t+\tau,X_{\tau}^{t,x})]
\end{align}
Using \eqref{eq:sub1}, \eqref{eq:sub2} and the definition of $\tau_0$:
\begin{align}\label{eq:sub3}
0&\leq \mathbb{E} [e^{-a\tau} \psi(t+\tau,X_{\tau}^{t,x})-\psi(t,x)]\nonumber \\
&=\mathbb{E}[\int_0^{ \tau} (\partial_s+\mathcal{L}-a) \psi(t+s,X_{s}^{t,x})   ds
\leq -\epsilon \mathbb{E}[\tau].
\end{align}
Using Tchebyshev's inequality we deduce 
\begin{align*}
\mathbb{Q}(\tau_0 \leq \tau^{\delta_m} )&\leq \mathbb{Q}(\sup_{0\leq s \leq \tau^{\delta_m}} |X^{x}_s-x|>\eta)\\
&\leq  \frac{1}{\eta^2}\mathbb{E}^{\mathbb{Q}}[ \sup_{0\leq s\leq \tau^{\delta_m}}|X^{x}_s-x|^2]\xrightarrow[m \to \infty]{} 0.
\end{align*}
Moreover, since
\begin{align*}
\mathbb{Q}(\tau_0 > \tau^{\delta_m} )\leq  \frac{1}{\tau^{\delta_m}}\mathbb{E}^{\mathbb{Q}} [\tau_0\wedge \tau^{\delta_m}]\leq 1,
\end{align*}
this implies that $\frac{1}{\tau^{\delta_m} }\mathbb{E}[\tau]$ converges to 1 when $m \to \infty$. We conclude from \eqref{eq:sub3} that $\epsilon \leq 0$ which is a contradiction. Then $\psi$ is a subsolution. 
\end{proof}
\subsection{Uniqueness and convexity}
In this section, we prove a comparison result from which we obtain the uniqueness of the
solution of the IPDE \eqref{HJB}. We discuss the convexity of viscosity solutions using the comparison technique. In proving the uniqueness result and convexity for viscosity solutions of second order
equations, we will need a fair amount of notation.  We define the second order sub and super "jets" $J_{}^{2,+}$, $J_{}^{2,-}$
of mappings $u\in \mathcal{C}([0,T]\times \Omega)$. We set 
\begin{align*}
J_{}^{2,+}=&\Big\{ (p_0,p,X)\in [0,T] \times\mathbb{R}\times \mathbb{R}: u(y)\leq u(x)+p_0(t-s)+p(y-x)\\
&+\frac{1}{2}X(y-x)^2+o(|y-x|^2) \text{ as } (s,y)\rightarrow (t,x) \Big\}
\end{align*}
and
\begin{align*}
J_{}^{2,-}=&\Big\{ (p_0,p,X)\in [0,T] \times\mathbb{R}\times  \mathbb{R}: u(y)\geq u(x)+p_0(t-s)+p(y-x)\\
&+\frac{1}{2}X(y-x)^2+o(|y-x|^2) \text{ as } (s,y)\rightarrow (t,x)\Big\}
\end{align*}
\begin{theorem}\label{theorem2}
Let $u$ (resp. $v$) $\in USC\big([0; T ] \times\mathbb{R}\big)$ (resp. $\in LSC\big([0; T ] \times\mathbb{R}\big)$) be a viscosity subsolution (resp. supersolution) of \eqref{HJB}. If $u(T,x) \leq v(T,x)$
for all $x\in \mathbb{R}$
, then 
\begin{align*}
u(t, x) \leq v(t, x) \hspace{0.3cm}\forall [0; T ] \times \mathbb{R}
\end{align*}

\end{theorem}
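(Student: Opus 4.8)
The plan is to prove the comparison principle by the classical \emph{doubling of variables} technique combined with a penalization adapted both to the unbounded domain $\mathbb{R}$ and to the nonlocal term, arguing by contradiction within the natural class of functions of at most linear growth (which is where $u$ lives by the continuity Proposition). Suppose the conclusion fails, so that $M := \sup_{[0,T]\times\mathbb{R}}(u-v) > 0$. Since $u$ is USC, $v$ is LSC, and both grow at most linearly, I would study the maximizers of
\[
\Phi_{\varepsilon,\eta,\beta}(t,x,s,y) = u(t,x) - v(s,y) - \frac{|x-y|^2}{2\varepsilon} - \frac{|t-s|^2}{2\varepsilon} - \eta\big(\sqrt{1+x^2}+\sqrt{1+y^2}\big) - \frac{\beta}{T-t},
\]
where $\varepsilon,\eta,\beta>0$ are parameters sent to $0$ in a prescribed order. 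The weight $\eta(\cdots)$ forces the supremum to be attained at a finite point despite the linear growth, while $\beta/(T-t)$ turns $u$ into a \emph{strict} subsolution and pushes the maximizer away from the terminal slice $t=T$, where $u(T,\cdot)\le v(T,\cdot)$ is known.

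Let $(\bar t,\bar x,\bar s,\bar y)$ be a maximizer of $\Phi_{\varepsilon,\eta,\beta}$. The standard estimates should give that for $\eta,\beta$ fixed and $\varepsilon$ small the maximum exceeds $M/2>0$, that $|\bar x-\bar y|^2/\varepsilon\to 0$ and $|\bar t-\bar s|^2/\varepsilon\to 0$ as $\varepsilon\to 0$, and that $\bar t,\bar s<T$. Because $M>0$, at the maximizer one has $u(\bar t,\bar x)-\bar x>0$ for $\eta,\beta$ small, so the obstacle part $u-x$ of the variational inequality \eqref{HJB} is inactive and only the integro-differential part is relevant. I would then invoke the \emph{nonlocal version of the Crandall--Ishii (Jensen--Ishii) lemma} (in the Barles--Imbert formulation) at $(\bar t,\bar x,\bar s,\bar y)$: this yields $p_0,p$ and symmetric $X,Y$ in the closures of the relevant jets $J^{2,+}$ of $u$ and $J^{2,-}$ of $v$, satisfying the matrix inequality
\[
\begin{pmatrix} X & 0 \\ 0 & -Y \end{pmatrix} \leq \frac{3}{\varepsilon}\begin{pmatrix} I & -I \\ -I & I \end{pmatrix},
\]
together with the essential control of the nonlocal increments of $u$ and $v$ by those of the smooth penalization. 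Subtracting the subsolution inequality for $u$ from the supersolution inequality for $v$, I would use a common near-optimal control $q$ in the supremum, exploiting $\sup_q F(q)-\sup_q G(q)\le \sup_q(F(q)-G(q))$, so that the volatility $\sigma(q-\cdot)$ and the jump amplitude $(q-\cdot)\frac{e^z-1}{e^z}$ are affine, hence Lipschitz in the state uniformly in $|q|\le 1$.

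The main obstacle will be the nonlocal integral term. I would split $\int_{\mathbb{R}}$ at $|z|=\kappa$: on $\{|z|<\kappa\}$ the jump amplitudes are small, so a second-order Taylor expansion against the $C^2$ penalization bounds this singular part by $C\kappa$ times the right-hand side of the matrix inequality; on $\{|z|\ge\kappa\}$ the measure $\tilde\nu$ is finite and the increments of $u-v$ are dominated by the doubling functional, giving a bound that vanishes as $\varepsilon\to 0$. Assembling the second-order local terms through the matrix inequality (whose contribution is $O(|\bar x-\bar y|^2/\varepsilon)\to 0$), the first-order terms through the Lipschitz dependence of the coefficients, and the strict gain $-\beta/(T-\bar t)^2<0$ from the time penalization, one reaches an inequality of the form $aM \le o(1) + C\eta$ as $\varepsilon\to 0$. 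Sending then $\kappa\to 0$ and finally $\eta,\beta\to 0$, and using the positivity of the dividend yield $a>0$, gives $aM\le 0$, contradicting $M>0$. Uniqueness follows at once by exchanging the roles of $u$ and $v$, and the announced convexity can be obtained by the same comparison argument applied to a symmetrized (sup-convolution) modification of the solution.
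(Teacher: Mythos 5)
Your overall scheme (doubling of variables, nonlocal Jensen--Ishii lemma, splitting the jump integral at a small radius, treating the obstacle via $v\ge y$ so that only the PDE branch is active) is the right family of argument and matches the paper's skeleton. However, there is a genuine gap in the choice of penalization, and it is precisely the point where this problem differs from a textbook comparison proof. Your localization weight $\eta\bigl(\sqrt{1+x^2}+\sqrt{1+y^2}\bigr)$ grows only \emph{linearly}, while $u$ and $v$ themselves are only known to have linear growth (cf.\ \eqref{linear}); hence $u(t,x)-v(s,y)-\eta(\sqrt{1+x^2}+\sqrt{1+y^2})$ can be unbounded above for small $\eta$, and the maximizer $(\bar t,\bar x,\bar s,\bar y)$ you rely on need not exist. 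Even if a maximizer existed, the coefficients of the operator are themselves unbounded: the drift and jump amplitude are proportional to $(q-x)$, so the first-order and nonlocal error terms produced by your penalty are of size $\eta\,(1+|\bar x|+|\bar y|)$, and you have no bound forcing $\eta|\bar x_\eta|\to 0$ (the inequality $\eta\sqrt{1+\bar x^2}\le C(1+|\bar x|+|\bar y|)$ coming from linear growth is vacuous for $\eta<C$). Consequently your final estimate is not $aM\le o(1)+C\eta$ but $aM\le o(1)+C\eta(1+|\bar x_\eta|)$, and the contradiction does not follow.

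The paper resolves exactly this difficulty with a \emph{quadratic} penalization carrying a time-dependent amplification factor, $\tfrac{\theta}{2}e^{\lambda(T-t)}(x^2+y^2)$. Quadratic growth dominates the linear growth of $u-v$, so the maximum is attained and the maximizers stay in a compact set (estimates \eqref{eq1}--\eqref{eq2}); and, crucially, differentiating $e^{\lambda(T-t)}$ in time puts the term $+\lambda\tfrac{\theta}{2}e^{\lambda(T-t)}(\tilde x^2+\tilde y^2)$ on the good side of the inequality, which for $\lambda$ large enough \emph{absorbs} all the quadratic error terms $C\theta e^{\lambda(T-t)}(1+\tilde x^2+\tilde y^2)$ that the local and nonlocal parts of the operator generate when applied to the penalty (estimates \eqref{L} and \eqref{K}); only then is $\theta$ sent to zero. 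To repair your proof you would need either this absorption mechanism, or a penalty of strictly superlinear growth together with an argument that its contribution through the linearly growing coefficients vanishes in the limit --- which, without the $\lambda$-trick, it does not. The remaining ingredients of your proposal (time-doubling, the $\beta/(T-t)$ term, the contradiction format $aM\le 0$ using $a>0$, the $\kappa$-splitting of the integral) are sound and essentially equivalent to the paper's corresponding steps.
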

\begin{proof}
let us define the function  $\psi \in \mathcal{C}^{1,2}([0, T]\times \mathbb{R}) $. We introduce the function
\begin{align}\label{eq:three}
\Psi(t,x,y)=u(t,x)-v(t,y)-\frac{|x-y|^2}{2\alpha}-\frac{\theta}{2}e^{\lambda(T-t)}(x^2+y^2)
\end{align}
where $\epsilon$, $\alpha$ are positive parameters which are devoted to tend to zero.\\
Since $u, v \in \mathcal{C}([0, T ] \times \mathbb{R}$, $\Psi$ admits a maximum at  $(\tilde{t}, \tilde{x},\tilde{y})$ such that is a global maximum point of $\Psi$. We have that
$\Psi(\tilde{t}, \tilde{x}, \tilde{x}) +\Psi(\tilde{t}, \tilde{y}, \tilde{y})\leq 2\Psi(\tilde{t} , \tilde{x},\tilde{y})$ and using \eqref{linear}, which give
\begin{align}\label{eq1}
\frac{|\tilde{x}-\tilde{y}|^2}{2\alpha}&\leq 2C(1+\tilde{x}+\tilde{y})
\end{align}
$\Psi(T, 0, 0) \leq \Psi(\tilde{t} , \tilde{x},\tilde{y})$and  \eqref{linear} give
\begin{align}
\frac{\theta}{2}(\tilde{x}^2+\tilde{y}^2)&\leq 
v(T,0)-u(T,0)+u(\tilde{t},\tilde{x})-v(\tilde{t},\tilde{y})\nonumber\\
&\leq 2C(1+\tilde{x}+\tilde{y})
\end{align}
We deduce that there exists a constant C , depending on $\theta$, such that:
\begin{align}\label{eq2}
\tilde{x},\tilde{y}\leq C_{\theta}
\end{align}
It follows from \eqref{eq1}and \eqref{eq2} that, along a subsequence  $(\tilde{t},\tilde{x},\tilde{y})$ converges to $(t_0, x_0, y_0) \in [0, T ] \times \mathbb{R}$ as $\alpha$ go to zero and $x_0=y_0$.\\
If $\tilde{t}= T$ then knowing that $\Psi(t, x, x) \leq \Psi(T , \tilde{x},\tilde{y})$, we have
\begin{align*}
u(t,x)-v(t,x)-\theta e^{\lambda(T-t)}x^2\leq u(T,\tilde{x})-v(T,\tilde{y})
\end{align*}
sending $\alpha$ and $\theta$ to zero give us $u(t,x)\leq v(t,x)$.\\
In the remaining of the proof we consider the case $\tilde{t}< T$. Applying Theorem 8.3 of Crandall-Ishii  \cite{Grandall} to the function $\Psi$ at point $ (\tilde{t},\tilde{x},\tilde{y})$, we can find real numbers $p_0$, $\tilde{X}$ and $\tilde{Y} \in \mathbb{R}$
such that
\begin{align*}
\Big(p_0-\lambda \frac{\theta}{2} e^{\lambda(T-t)}(\tilde{x}^2+\tilde{y}^2),p_x,\tilde{X}\Big)\in J^{2,+}u(\tilde{t},\tilde{x})&\\
\Big(p_0,p_y,\tilde{Y}\Big)\in J^{2,-}v(\tilde{t},\tilde{y})
\end{align*}
with 
\begin{align*}
p_x= \frac{|\tilde{x}-\tilde{y}|}{\alpha}+\theta e^{\lambda(T-\tilde{t})}\tilde{x} \hspace{1cm} p_y=\frac{|\tilde{x}-\tilde{y}|}{\alpha}-\theta e^{\lambda(T-\tilde{t})}\tilde{y}
\end{align*}
Since $u$ and $v$ are respectively sub and supersolution of \eqref{HJB}, we have for some $\phi \in \mathcal{C}^2 ([0,T]\times \mathbb{R})$
\begin{align}\label{eq:one}
\min\Big(&au(\tilde{t},\tilde{x})-p_0+\lambda \frac{\theta}{2}e^{\lambda(T-(\tilde{t})}(\tilde{x}^2+\tilde{y}^2)-A(t,\tilde{x},u,p_x,\tilde{X},\phi),u(\tilde{t},\tilde{x})-\tilde{x}\Big)\geq 0 
\end{align}
and
\begin{align}\label{eq:two}
\min\Big(&av(\tilde{t},\tilde{y})-p_0 -A((\tilde{t},\tilde{y},p_y,\tilde{Y},\phi) ,v(\tilde{t},\tilde{y})-\tilde{y}\Big)\leq 0 
\end{align}
where 
\begin{align*}
A(\tilde{t},\tilde{x},p,X,\phi)&:=\sup_{q\leq 1} \Big \{L(\tilde{t},\tilde{x},p,X)+K_{\delta}[u,\phi,\tilde{x}]+K^{\delta}[u,\phi,\tilde{x}]\Big \}
\end{align*}
and
\begin{align*}
K_{\delta}[u,\phi,x]&:=\int_{|z|\leq \delta}\phi(t,x+\zeta(x,z))-\phi(t,x)-\zeta(x,z)\partial_x \phi(t,x) \tilde \nu(dz)\\
K^{\delta}[u,\phi,x]&:=\int_{|z|>\delta}u(t,x+\zeta(x,z))-u(t,x)-\zeta(x,z) p_x \tilde \nu(dz)\\
\zeta(x,z)&=(q-x)\frac{e^z-1}{e^z}
\end{align*}
and
\begin{align*}
L(t,x,p,X):=\sigma^2\frac{(q-x)^2}{2}X-(q-x)p
\end{align*}
Subtracting the two inequalities \eqref{eq:one} and \eqref{eq:two}, 
and remarking that $min(a, b)  - min(d, e) \leq 0$ implies either $(a-d)  \leq 0$ or
$(b-e) \leq 0$, we divide our consideration into two cases:

\begin{itemize}
\item \underline{The first case when}:
 \end{itemize}
\begin{align*}
a(u(t,\tilde{x})-v(t,\tilde{y}))+\lambda \frac{\theta}{2}e^{\lambda(T-t)}(\tilde{x}^2+\tilde{y}^2)&\leq A(t,\tilde{x},p_x,\tilde{X})-A(t,\tilde{y},p_y,\tilde{Y})
\end{align*}
We have the first estimates:
\begin{align}\label{L}
L(t,\tilde{x},p_x)-L(t,\tilde{y},p_y)\leq C\Big[\frac{|\tilde{x}-\tilde{y}|^2}{\alpha}+\frac{\theta}{2}e^{\lambda(T-t)}(1+\tilde{x}^2+\tilde{y}^2) \Big]
\end{align}
Since $\phi$ is a $\mathcal{C}^2$-function we have:
\begin{align*}
\phi(t,x+\zeta(x,z))-\phi(t,x)-\zeta(x,z)\partial_x\phi(t,x)\leq C|q-x|^2
\end{align*}
then 
\begin{align*}
\limsup_{\delta \to 0}K_{\delta}[v,\phi,\tilde{x}]\leq 0
\end{align*}
From the definition of $\Psi$ \eqref{eq:three} we have:
\begin{align*}
I&=u(\tilde{t},\tilde{x}+\zeta(\tilde{x},z))-u(\tilde{t},\tilde{x})-\zeta(\tilde{x},z)\big(\frac{|\tilde{x}-\tilde{y}|}{\alpha}+\theta e^{\lambda(T-\tilde{t})}\tilde{x}) \big)\tilde \nu(dz)\\
&-v(\tilde{t},\tilde{y}+\zeta(\tilde{y},z))-v(\tilde{t},\tilde{y})+\zeta(\tilde{y},z)\big(\frac{|\tilde{x}-\tilde{y}|}{\alpha}-\theta e^{\lambda(T-\tilde{t})}\tilde{y})\big) \tilde \nu(dz)\\
&=\Psi(\tilde{t},\zeta(\tilde{x},z),\zeta(\tilde{y},z))-\Psi(\tilde{t},\tilde{x},\tilde{y})\\
&+\frac{1}{\alpha}|\frac{e^z-1}{e^z}|^2|\tilde{x}-\tilde{y}|^2+\theta e^{\lambda(T-\tilde{t})}|\frac{e^z-1}{e^z}|^2(|q-\tilde{x}|^2+|q-\tilde{y}|^2).
\end{align*}
Since $(\tilde{t}, \tilde{x}, \tilde{y})$ is a maximum point of $\Psi$ then $\Psi(\tilde{t},\tilde{x}+\zeta(\tilde{x},z),\tilde{y}+\zeta(\tilde{y},z))-\Psi(\tilde{t},\tilde{x},\tilde{y})\leq 0$, therefore
\begin{align}\label{K}
K^{\delta}\leq C\Big( \frac{1}{\alpha}|\tilde{x}-\tilde{y}|^2+\frac{\theta}{2}e^{\lambda(T-\tilde{t})}(1+|\tilde{x}|^2+|\tilde{y}|^2)\Big).
\end{align}
Therefore we can conclude that 
\begin{align}\label{eq:3}
a(u(\tilde{t},\tilde{x})-v(\tilde{t},\tilde{y}))+\lambda \theta e^{\lambda(T-\tilde{t})}(\tilde{x}^2+\tilde{y}^2)\leq C\Big[\frac{|\tilde{x}-\tilde{y}|^2}{\alpha}+\frac{\theta}{2}e^{\lambda(T-\tilde{t})}(1+\tilde{x}^2+\tilde{y}^2) \Big]
\end{align}
knowing that $\Psi(t,x,x)\leq \Psi(\tilde{t},\tilde{x},\tilde{y})$ then  
\begin{align}\label{eq:four}
u(t,x)-v(t,x)-\theta e^{\lambda(T-t)}x^2\leq u(\tilde{t},\tilde{x})-v(\tilde{t},\tilde{y})
\end{align}
Using \eqref{eq:3} we have
\begin{align*}
u(t,x)-v(t,x)-\frac{\theta}{2} e^{\lambda(T-t)}x^2&\leq 2C\Big[\frac{|\tilde{x}-\tilde{y}|^2}{a\alpha}+\frac{\theta}{2a}e^{\lambda(T-t)}(1+\tilde{x}^2+\tilde{y}^2) \Big]\\
&-\frac{\lambda}{2a}\theta e^{\lambda(T-t)}(\tilde{x}^2+\tilde{y}^2)
\end{align*}
Sending $\alpha \to 0$ we had $(\tilde{t},\tilde{x},\tilde{y})\to (t_0,x_0,x_0)$ and
\begin{align*}
u(t,x)-v(t,x)-\frac{\theta}{2} e^{\lambda(T-t)}x^2&\leq C\frac{\theta}{2a}e^{\lambda(T-t_0)}(1+2x_0^2) \\
&-\frac{\lambda}{a}\theta e^{\lambda(T-t_0)}x_0^2
\end{align*}
By taking $\lambda$ big enough we have
\begin{align*}
u(t,x)-v(t,x)-\theta e^{\lambda(T-t)}x^2&\leq 0
\end{align*}
By sending $\theta$ to zero we have
\begin{align*}
u(t,x)\leq v(t,x)
\end{align*}
\begin{itemize}
\item \underline{The second case}: we have
 
\begin{align*}
u(\tilde{t},\tilde{x})- v(\tilde{t},\tilde{y})+\Big( \tilde{x}-\tilde{y} \Big) \leq 0
\end{align*}
Using \eqref{eq:four} and letting $\alpha$ and $\theta$ go to zero we find out that $u(t,x)\leq v(t,x)$.
\end{itemize}
\end{proof}
\begin{corollary}
$u$ viscosity solution of \eqref{HJB} is unique.
\end{corollary}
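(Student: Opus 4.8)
The plan is to derive uniqueness directly from the comparison principle established in Theorem \ref{theorem2}, applied twice with the roles of the two solutions interchanged. Suppose $u_1$ and $u_2$ are both viscosity solutions of \eqref{HJB}. By the definition of a viscosity solution, each $u_i$ is simultaneously a subsolution and a supersolution, and each is continuous, hence belongs to $USC([0,T]\times\mathbb{R})\cap LSC([0,T]\times\mathbb{R})$. Thus both hypotheses of Theorem \ref{theorem2} are available for either ordering of the pair $(u_1,u_2)$.

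First I would regard $u_1$ as the subsolution and $u_2$ as the supersolution. Both satisfy the same terminal condition $u_1(T,x)=x^+=u_2(T,x)$ prescribed in \eqref{HJB}, so in particular $u_1(T,x)\leq u_2(T,x)$ for every $x\in\mathbb{R}$. Theorem \ref{theorem2} then yields $u_1(t,x)\leq u_2(t,x)$ on all of $[0,T]\times\mathbb{R}$.

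Next I would interchange the roles, treating $u_2$ as the subsolution and $u_1$ as the supersolution. The terminal condition again gives $u_2(T,x)\leq u_1(T,x)$, and a second application of Theorem \ref{theorem2} produces the reverse inequality $u_2(t,x)\leq u_1(t,x)$ throughout $[0,T]\times\mathbb{R}$. Combining the two inequalities gives $u_1\equiv u_2$, which is the desired uniqueness.

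There is essentially no obstacle at this stage: all of the analytic difficulty has already been absorbed into the comparison principle, and the only thing to check is that each solution is admissible as both a sub- and a supersolution in the required semicontinuity classes. Since the value function and any competing solution are continuous, this membership is automatic, and the corollary reduces to the standard \emph{apply comparison twice} argument.
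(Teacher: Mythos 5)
Your proof is correct and is exactly the argument the paper intends: the corollary is stated as an immediate consequence of the comparison principle (Theorem \ref{theorem2}), obtained by applying it twice with the roles of subsolution and supersolution interchanged, since both solutions share the terminal datum $x^+$. The paper leaves this step implicit, and your write-up fills it in with the standard two-sided comparison argument.
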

\noindent The following proposition discusses the convexity of viscosity solutions using the comparison technique.
\begin{proposition}
Let $u$ $\in C([0, T ] \times\mathbb{R})$ be a viscosity solution of \eqref{HJB}. If $u(T,x) $ is convex for all $x\in \mathbb{R}$ then $u(t, x)$ is convex for all $[0, T ] \times \mathbb{R}.$
\end{proposition}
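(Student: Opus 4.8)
The strategy is to recycle the comparison principle of Theorem \ref{theorem2}, but on the doubled space $[0,T]\times\mathbb{R}^2$ applied to two functions manufactured from $u$. Fix $\mu\in(0,1)$ and set, for $(t,x,y)\in[0,T]\times\mathbb{R}^2$,
\[
w(t,x,y):=u\bigl(t,\mu x+(1-\mu)y\bigr),\qquad W(t,x,y):=\mu\,u(t,x)+(1-\mu)\,u(t,y).
\]
Convexity of $u(t,\cdot)$ is exactly the inequality $w\le W$ on $[0,T]\times\mathbb{R}^2$, and both $w$ and $W$ are continuous with at most linear growth, inherited from the continuity estimate \eqref{linear}. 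I would therefore identify a two-dimensional integro-differential equation of which $w$ is a subsolution and $W$ a supersolution, with matching terminal ordering, and then invoke the $\mathbb{R}^2$-analogue of Theorem \ref{theorem2} to conclude $w\le W$.

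The right doubled operator is the one obtained by driving two copies of the reduced state by a common control $q$ and common noise: $\mathcal{L}^2:=\sup_{|q|\le1}\mathcal{A}^q$, where $\mathcal{A}^q$ has drift $-(q-x)\partial_x-(q-y)\partial_y$, rank-one diffusion coefficient $\sigma^2 v_qv_q^{\top}$ with $v_q=(q-x,q-y)$, and common jumps $x\mapsto x+(q-x)\beta(z)$, $y\mapsto y+(q-y)\beta(z)$, $\beta(z)=\tfrac{e^z-1}{e^z}$. For the affine function $w$ the identity $\mu(q-x)+(1-\mu)(q-y)=q-(\mu x+(1-\mu)y)$ makes drift, diffusion and the common jump collapse exactly into the one-dimensional operator at $z=\mu x+(1-\mu)y$; rigorously, from a test function $\varphi$ touching $w$ from above I would read off the reduced jet by the constrained minimisation $X^\ast=\min\{\,v^\top(D^2\varphi)\,v:\ \mu v_1+(1-\mu)v_2=1\,\}$, which feeds into $J^{2,+}u(t_0,z_0)$ and, together with $v_q^\top(D^2\varphi)v_q\ge (q-z_0)^2X^\ast$, shows that the subsolution inequality for $u$ forces the one for $w$. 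The nonlocal term is handled by the same $\delta$-splitting $K_\delta,K^\delta$ as in Theorem \ref{theorem2}, the large-jump part being evaluated on the genuine function $w$, where the affine structure again reduces it to the $1$D nonlocal term on $u$.

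For the separable function $W$ the mixed derivative $\partial_{xy}W$ vanishes, so the cross term of $\mathcal{A}^q$ drops out and $\mathcal{A}^qW=\mu\,\mathcal{L}^qu(t,x)+(1-\mu)\,\mathcal{L}^qu(t,y)$, where $\mathcal{L}^q$ is the pre-supremum $1$D operator. Subadditivity of the supremum, $\sup_q\bigl(\mu\,\mathcal{L}^qu(t,x)+(1-\mu)\,\mathcal{L}^qu(t,y)\bigr)\le\mu\sup_q\mathcal{L}^qu(t,x)+(1-\mu)\sup_q\mathcal{L}^qu(t,y)$, then gives $\mathcal{L}^2W\le\mu\,\mathcal{L}u(t,x)+(1-\mu)\,\mathcal{L}u(t,y)$, so that the supersolution inequalities for $u$ at $x$ and at $y$ combine into the supersolution inequality for $W$; the obstacle parts are immediate since $u(t,\cdot)\ge\,\cdot$ yields both $w-\ell\ge0$ and $W-\ell\ge0$ with $\ell(x,y)=\mu x+(1-\mu)y$. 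At $t=T$ the assumed convexity of $u(T,\cdot)$ gives $w(T,\cdot,\cdot)\le W(T,\cdot,\cdot)$. Re-running the doubling-of-variables proof of Theorem \ref{theorem2} verbatim on $[0,T]\times\mathbb{R}^2$ — with penalisation $\tfrac{|X-Y|^2}{2\alpha}+\tfrac{\theta}{2}e^{\lambda(T-t)}|X|^2$, Crandall-Ishii, and the same two-case split — yields $w\le W$, i.e. $u(t,\mu x+(1-\mu)y)\le\mu u(t,x)+(1-\mu)u(t,y)$ for all $\mu\in(0,1)$, which is the claimed convexity.

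I expect the genuine difficulty to be the viscosity rigor rather than the algebra. The delicate point is the supersolution property of $W$: a joint test function touching the separable $W$ from below does not split into an $x$-test and a $y$-test because of the coupled time dependence, so the ``sum of two supersolutions is a supersolution of the doubled equation'' statement must be proved through the jets $J^{2,-}$ and the sublinearity of $\mathcal{L}^2$ rather than by a naive marginal reduction; the nonlocal term adds the usual care in passing the $\delta$-split through the supremum uniformly in $q$. It is worth noting that the structural reason behind all of this is probabilistic: because the controlled dynamics of $L$ are affine in the state, running a common control and common noise gives the pathwise identity $L^{t,\mu x+(1-\mu)y}_s=\mu L^{t,x}_s+(1-\mu)L^{t,y}_s$, and convexity of $\ell\mapsto\ell^+$ together with the monotonicity of $\mathbb{E}^{\mathbb{Q}}$ and of the suprema over $q$ and $\tau$ then gives the convexity of $u$ directly; this provides a fully rigorous alternative should the viscosity verification prove cumbersome.
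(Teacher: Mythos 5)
Your proposal is correct in outline but takes a genuinely different route from the paper. The paper never passes through a doubled-space equation: it runs a direct tripling-of-variables argument on the function $\Psi(t,x,y,z)=u(t,z)-\lambda u(t,x)-(1-\lambda)u(t,y)-\frac{|z-\lambda x-(1-\lambda)y|^2}{2\alpha}-\frac{\theta}{2}e^{\rho(T-t)}(x^2+y^2+z^2)$, applies the Crandall--Ishii lemma at a maximum point to extract one super-jet (at $\tilde z$, where $u$ is used as a subsolution) and two sub-jets (at $\tilde x$ and $\tilde y$, where $u$ is used as a supersolution), combines the three viscosity inequalities, and closes the estimate with the same two-case split and limits ($\alpha\to 0$, then the exponential weight with large $\rho$, then $\theta\to 0$) as in Theorem \ref{theorem2}. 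Your reduction to a comparison principle on $[0,T]\times\mathbb{R}^2$ is a legitimate alternative, but as written it is a program rather than a proof: it defers the real work to three lemmas you only sketch, namely (i) that $w$ is a viscosity subsolution of the doubled equation (the constrained-minimization extraction of a one-dimensional jet from a two-dimensional test function, including the nonlocal $K_\delta$, $K^\delta$ terms, must actually be carried out); (ii) that $W$ is a viscosity supersolution of the doubled equation --- as you yourself flag, ``a sum of supersolutions in separate variables'' is not a marginal statement and needs its own jet-level argument; and (iii) the comparison theorem itself for a degenerate (rank-one diffusion) nonlocal operator in two space dimensions, which is not literally Theorem \ref{theorem2} and would have to be re-proved. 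Each of (i)--(iii) consumes essentially the same Crandall--Ishii machinery that the paper spends once, so your route is not wrong, but it is not shorter, and the gaps are exactly where the difficulty lives.

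By contrast, your closing probabilistic remark is the strongest part of the proposal and is essentially complete: since the controlled dynamics of $L$ are affine in the state, a common adapted control $q$ and common noise give the pathwise identity $L^{t,\mu x+(1-\mu)y}_s=\mu L^{t,x}_s+(1-\mu)L^{t,y}_s$, and convexity of $\ell\mapsto\ell^+$ together with subadditivity of the suprema over the same admissible $(q,\tau)$ yields $u(t,\mu x+(1-\mu)y)\leq \mu u(t,x)+(1-\mu)u(t,y)$ in a few lines. Two caveats are needed to make this a proof of the stated proposition: it establishes convexity of the \emph{value function}, not of an arbitrary continuous viscosity solution, so you must invoke the uniqueness corollary of Theorem \ref{theorem2} (together with the theorem that the value function is a viscosity solution of \eqref{HJB}) to transfer the conclusion; and it is tied to the actual terminal datum $x^+$, whereas the paper's PDE argument proves convexity preservation for any convex terminal data.
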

\begin{proof}
Let $\Psi(t,x,y,z)=u(t,z)-\lambda u(t,x)-(1-\lambda)u(t,y)-\psi(t,x,y,z)$ a function  where $(t,x,y,z)\in[0, T ] \times \mathbb{R}^3$ and
\begin{align}\label{eq:three}
\psi(t,x,y,z)=\frac{|z-\lambda x-(1-\lambda) y|^2}{2\alpha}+\frac{\theta}{2}e^{\rho(T-t)}(z^2+x^2+y^2)
\end{align}
Since $u \in \mathcal{C}([0, T ] \times \mathbb{R}$, $\Psi$ admits a maximum at  $(\tilde{t}, \tilde{x},\tilde{y},\tilde{z})$ such that is a global maximum point of $\Psi$.  As in theorem \eqref{theorem2} we have that along a subsequence  $(t,\tilde{x},\tilde{y},\tilde{z})$ converges to $(t_0, x_0, y_0,z_0) \in [0, T ] \times \mathbb{R}^3$ as $\alpha$ go to zero and $z_0=\lambda x_0+(1-\lambda)y_0$.\\
\noindent If $\bar{t}= T$, we had that $\Psi(t,x,y,\lambda x+(1-\lambda )y) \leq \Psi(T,\tilde{x},\tilde{y},\tilde{z})$ for all $(t,x,y)\in [0,T]\times \mathbb{R}^2$
\begin{align*}
u(t,\lambda x+(1-\lambda )y)-\lambda u(t,x)-(1-\lambda)u(t,y)&-\frac{\theta}{2}e^{\rho(T-t)}(z^2+x^2+y^2)\\
&\leq  u(T,\tilde{z})-\lambda u(T,\tilde{x})-(1-\lambda)u(T,\tilde{y})
\end{align*}
Sending $\alpha$ and $\theta$ gives the convexity of $u$ knowing that $u$ is convex in $T$. In the following we suppose $\tilde{t}\leq T$. Applying Theorem 8.3 of Crandall-Ishii  \cite{Grandall} to the function $\Psi$ at point $ (\tilde{t},\tilde{x},\tilde{y},\tilde{z})$, we can find real numbers $p_0\in \mathbb{R}^3 $ and $(X,  Y, Z) \in \mathbb{R}^3$
such that
\begin{align*}
\Big(p_{0x} ,-p_x,X\Big)\in J^{2,-}\lambda u(\tilde{t},\tilde{x})&\\
\Big(p_{0y},-p_y,Y \Big)\in J^{2,-}(1-\lambda) u(\tilde{t},\tilde{y})&\\
\Big(p_{0z}-\frac{\theta}{2}\rho e^{\rho(T-t)}(\tilde{z}^2+\tilde{x}^2+\tilde{y}^2),p_z,Z\Big)\in J^{2,+}u(\tilde{t},\tilde{z})&
\end{align*}
where
$p_{0x}+p_{0y}-p_{0z}=0$, 
and
\begin{align*}
p_{x_i}:= \partial_{x_i}\psi(t,x_1,x_2,x_3)
\end{align*}
We know that $u$ is a viscosity solution and using the same notation as in theorem \eqref{theorem2} we have:
\begin{align}\label{eq:prop31}
\min\Big(&au(\tilde{t},\tilde{x})-\frac{1}{\lambda}\big(p_{0x}+A(t,\tilde{x},u,p_x,X)\big) ,u(\tilde{t},\tilde{x})-\tilde{x}\Big)\geq 0 
\end{align}
and
\begin{align}\label{eq:prop32}
\min\Big(&au(\tilde{t},\tilde{y})-\frac{1}{1-\lambda}\big(p_{0y}+A(t,\tilde{y},u,p_y,Y)\big) ,u(\tilde{t},\tilde{y})-\tilde{y}\Big)\geq 0 
\end{align}
finally
\begin{align}\label{eq:prop33}
\min\Big(&au(\tilde{t},\tilde{z})+\rho \frac{\theta}{2} e^{\rho(T-t)}(\tilde{x}^2+\tilde{y}^2+\tilde{z}^2)+p_{0,z}+A(t,\tilde{z},u,p_z,Z) ,u(\tilde{t},\tilde{z})-\tilde{z}\Big)\leq 0 
\end{align}
Next, take \eqref{eq:prop33} minus $\lambda$ times \eqref{eq:prop31} minus $(1 - \lambda)$ times \eqref{eq:prop32}, and remarking that $min(a, b)  - \lambda min(d, e)-(1-\lambda)min(f,g) \leq 0$ implies either $(a-\lambda d-(1-\lambda) f)  \leq 0$ or
$(b-\lambda e-(1-\lambda) g) \leq 0$, we divide our consideration into two cases:we divide our consideration into two cases.
 \begin{itemize}
\item \underline{The first case}:
 \end{itemize}
\begin{align*}
a(u(\tilde{t},\tilde{z})-\lambda u(\tilde{t},\tilde{y})-(1-\lambda)u(\tilde{t},\tilde{y}))&+\rho \theta e^{\rho(T-\tilde{t})}(\tilde{x}^2+\tilde{y}^2+\tilde{z}^2)\\
&\leq A(\tilde{t},\tilde{z},p_z,Z)-A(\tilde{t},\tilde{x},p_x,X)- A(\tilde{t},\tilde{y},p_y,Y)
\end{align*}
then using \eqref{L} and \eqref{K} we have
\begin{align}\label{eq88}
a(u(\tilde{t},\tilde{z})-\lambda u(\tilde{t},\tilde{y})-(1-\lambda)u(\tilde{t},\tilde{y}))&+\rho \frac{\theta}{2} e^{\rho(T-\tilde{t})}(\tilde{x}^2+\tilde{y}^2+\tilde{z}^2)\nonumber \\
&\leq C\Big(\frac{|\tilde{z}-\lambda \tilde{x}-(1-\lambda) \tilde{y}|^2}{2\alpha}+\frac{\theta}{2}e^{\rho(T-\tilde{t})}(\tilde{z}^2+\tilde{x}^2+\tilde{y}^2)\Big)
\end{align}
We know that $\Psi(t,x,y,\lambda x+(1-\lambda )y) \leq \Psi(\tilde{t},\tilde{x},\tilde{y},\tilde{z})$ for all $(t,x,y)\in [0,T]\times \mathbb{R}^2$
\begin{align}\label{eq22}
u(t,\lambda x+(1-\lambda )y)-\lambda u(t,x)-(1-\lambda)u(t,y)&-\frac{\theta}{2}e^{\rho(T-t)}(z^2+x^2+y^2)\nonumber \\
&\leq  u(\tilde{t},\tilde{z})-\lambda u(\tilde{t},\tilde{x})-(1-\lambda)u(\tilde{t},\tilde{y})
\end{align}
Taking in account the inequality \eqref{eq88} in \eqref{eq22} and taking  $\alpha \to 0$ we had $(\tilde{t},\tilde{x},\tilde{y},\tilde{z})\to (t_0,x_0,y_0,\lambda x_0+(1-\lambda)y_0)$ and
\begin{align*}
u(t,\lambda x-(1-\lambda) y)-\lambda u(t,x)-(1-\lambda)u(t,y)-\frac{\theta}{2}e^{\rho(T-t)}(z^2+x^2+y^2)&\leq C\frac{\theta}{a}e^{\rho(T-t_0)}(1+x_0^2+y_0^2+z_0^2) \\
&-\frac{\rho}{2a}\theta e^{\rho(T-t_0)}(x_0^2+y_0^2+z_0^2)
\end{align*}
By taking $\rho$ big enough we have
\begin{align*}
u(t,\lambda x-(1-\lambda) y)-\lambda u(t,x)-(1-\lambda)u(t,y)-\frac{\theta}{2}e^{\rho(T-t)}(z^2+x^2+y^2)&\leq 0
\end{align*}
By sending $\theta$ to zero we conclude
\begin{align*}
u(t,\lambda x-(1-\lambda) y)\leq \lambda u(t,x)+(1-\lambda)u(t,y)
\end{align*}
 \begin{itemize}
\item \underline{The second case}:

\begin{align*}
u(\tilde{t},\tilde{z})- \lambda u(\tilde{t},\tilde{x})-(1-\lambda)u(\tilde{t},\tilde{y})-\Big(\tilde{z}-\lambda \tilde{x} -(1-\lambda) \tilde{y}\Big) \leq 0
\end{align*}
Using \eqref{eq22} and letting $\alpha$ and $\theta$ go to zero we find that $u$ is convex
 \end{itemize}
\end{proof}
\bibliographystyle{elsarticle-num}


\end{document}